\documentclass[a4,11pt]{article}
\setlength{\headheight}{0.00in} \setlength{\footskip}{0.6in}
\usepackage{multirow}


\topmargin -5mm \textheight 215mm \textwidth  158.0mm
 \hoffset=-1.8truecm \voffset=0.3truecm


\newtheorem{lemma}{Lemma}
\newtheorem{theorem}{Theorem}
\newenvironment{proof}%
{\begin{trivlist}\item[\hspace*{\labelsep}{\it Proof.\/}]}%
{\hfill$\Box$\end{trivlist}}

\newcommand{\head}[1]
 {\markright{\hbox to 0pt{\vtop to 0pt{\hbox{}\vskip 3mm \hrule
 width  \textwidth \vss} \hss}{\sc #1}}}
\usepackage{latexsym,graphicx,amsfonts,amsmath}
\begin{document}

\title{A note on scheduling with low rank processing times}
\author{Lin Chen \and Deshi Ye
\and Guochuan Zhang}
\date{College of Computer Science, Zhejiang
University, Hangzhou, 310027, China}
\maketitle
\baselineskip 15pt

\begin{abstract}
We consider the classical minimum makespan scheduling problem, where
the processing time of job $j$ on machine $i$ is $p_{ij}$, and the
matrix $P=(p_{ij})_{m\times n}$ is of a low rank. It is proved in
~\cite{low} that rank 7 scheduling is NP-hard to approximate to a
factor of $3/2-\epsilon$, and rank 4 scheduling is APX-hard (NP-hard
to approximate within a factor of $1.03-\epsilon$). We improve this
result by showing that rank 4 scheduling is already NP-hard to
approximate within a factor of $3/2-\epsilon$, and meanwhile rank 3 scheduling
is APX-hard.

\vskip 2mm\noindent{\bf Keywords.}{Complexity, APX-hardness, Scheduling}
\end{abstract}
\section{Introduction}
Recently Bhaskara et al.~\cite{low} study the minimum makespan
scheduling problem in which the processing time of job $j$ on
machine $i$ is $p_{ij}$, and the matrix formed by the processing
times $P=(p_{ij})_{m\times n}$ is of a low rank. Formally speaking,
in this problem the matrix of processing times could be expressed as
$P=MJ$, where $M$ is an $m\times D$ matrix in which the row vector
$u_i$ represents the D-dimensional {\em speed vector} of machine
$i$, and $J$ is a $D \times n$ matrix in which the column vector
$v_j^T$ represents the D-dimensional {\em size vector} of job $j$.
The processing time of job $j$ on machine $i$ is defined by
$u_i\cdot v_j^T$. We adopt the notations of~\cite{low} by denoting
the above problem as $LRS(D)$. It is easy to see that in $LRS(D)$,
the rank of the matrix $P$ is at most $D$.

It is a new way of studying the traditional scheduling problem. From
this point of view the unrelated machine scheduling problem is a
scheduling problem where the matrix of job processing times could be
of arbitrary rank, while the related machine scheduling problem is a
scheduling problem where the matrix is of rank one.

In 1988, Hochbaum and Shmoys~\cite{HocShm88} gave a PTAS (polynomial
time approximation scheme) for the related machine scheduling
problem, i.e., $LRS(1)$. Later, Lenstra et al.~\cite{lenstra1990unr}
provided a 2-approximation algorithm for the unrelated machine
scheduling problem, i.e., $LRS(D)$ for arbitrary $D$. Such a result
was improved to a $(2 - 1/m)$-approximation algorithm by Shchepin
and Vakhania~\cite{shchepin2005optimal}. It remains open whether
there exists a polynomial time algorithm with approximation ratio
strictly less than $2$ for the unrelated machine scheduling problem.

Bhaskara et al.~\cite{low} prove that $LRS(D)$ is APX-hard (NP-hard
to approximate within a factor of $1.03-\epsilon$) when $D=4$, and
NP-hard to approximate within a factor of $3/2-\epsilon$ when $D=7$.

In this paper we improve the results in~\cite{low} by showing that
$LRS(4)$ is already NP-hard to approximate within a factor of
$3/2-\epsilon$, and $LRS(3)$ is APX-hard.

Roughly speaking, the overall structure of our reduction for $LRS(4)$ is similar
to that of~\cite{low}. The key ingredient to our stronger result is
that we construct the reduction from a variation of the
3-dimensional matching problem (instead of the standard
3-dimensional matching problem, as is used in~\cite{low}), and
design the job processing times in a more delicate way. For the APX-hardness of
$LRS(3)$, we make use of the idea from
~\cite{ETH} to design the reduction from the one-in-three 3SAT problem.

\section{Inapproximability of Rank 4 scheduling}
In this section, we prove that $LRS(4)$ is already NP-hard to approximate within a factor
of $3/2-\epsilon$ for any $\epsilon>0$ via a reduction from a variation of the 3 dimensional matching
problem, as is shown in the following.

\subsection{A variation of the 3-Dimensional Matching problem}\label{sec:3dmp}
The standard 3DM problem contains three disjoint element sets $W\cup
X\cup Y$ where $|W|=|X|=|Y|$, and a set of triples
$T\subseteq\{(w_i,x_j,y_k)|w_i\in W, x_j\in X, y_k\in Y\}$ where
every triple of $T$ is called as a {\em match}. A {\em perfect
matching} for 3DM is a subset $T'\subseteq T$ in which every element
of $W\cup X\cup Y$ appears exactly once. Deciding whether there
exists a perfect matching for 3DM is NP-complete~\cite{GarJoh79}.

In the standard 3DM, the subscripts of elements in a match could be
arbitrary. In this paper, however, we focus on the following
restricted form of 3DM.

\begin{itemize}
\item Elements: there are three disjoint sets of elements $W=\{w_i,\bar{w}_i|i=1,\cdots,3n\}$,
$X=\{s_i,a_i|i=1,\cdots,3n\}$ and $Y=\{s_i',b_i|i=1,\cdots,3n\}$
where $|W|=|X|=|Y|=6n$
\item Matches: there are two sets of matches
$T_1\subseteq\{(w_i,s_j,s_j'),(\bar{w}_i,s_j,s_j')|w_i\in W,s_j\in
X,s_j'\in Y\}$,
$T_2=\{(w_i,a_i,b_i),(\bar{w}_i,a_i,b_{\zeta{(i)}})|i=1,\cdots,3n\}$
where $\zeta$ is defined as $\zeta(3k+1)=3k+2$, $\zeta(3k+2)=3k+3$
and $\zeta(3k+3)=3k+1$ for $k=0,\cdots,n-1$
\end{itemize}

We remark that in the special form of 3DM above, $T_2$ is already
fixed. Similarly, a subset of $T_1\cup T_2$ is called a perfect
matching if among its matches every element of $W\cup X\cup Y$
appears exactly once. For simplicity, we denote the problem of
determining whether the set of matches in the above form admits a
perfect matching as 3DM$^\prime$. We prove that the 3DM$^\prime$
problem is NP-complete in the following theorem. It turns out that
the idea of the proof is similar to that of~\cite{ETH}.

\begin{theorem}
3DM$^\prime$ is NP-complete.
\end{theorem}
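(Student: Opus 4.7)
The plan is to verify membership in NP and then establish NP-hardness by reduction from a suitably restricted variant of SAT, following the overall strategy of~\cite{ETH}. Membership in NP is immediate: given a candidate subset $T'\subseteq T_1\cup T_2$, one verifies in polynomial time that every element of $W\cup X\cup Y$ appears in exactly one triple of $T'$.

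The heart of the argument is a structural lemma about $T_2$. Within each block of indices $\{3k+1,3k+2,3k+3\}$, only six triples are available, and $\zeta$ organizes their $b$-coordinates cyclically. First I would perform a short case analysis: each $a_{3k+r}$ forces a choice between the $w$-triple $(w_{3k+r},a_{3k+r},b_{3k+r})$ and the $\bar{w}$-triple $(\bar{w}_{3k+r},a_{3k+r},b_{\zeta(3k+r)})$, but the 3-cycle structure of $\zeta$ implies that any mixed selection either double-covers or leaves uncovered some $b$-element. Consequently each block must commit, in $T_2$, to all three $w$-triples or all three $\bar{w}$-triples; this dichotomy naturally encodes one Boolean variable $y_k$ per block, leaving either all three $\bar{w}_{3k+r}$ or all three $w_{3k+r}$ free to be matched by triples of $T_1$.

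For the reduction, I would start from a restricted one-in-three 3SAT instance with $n$ variables and $3n$ clauses, in which each variable occurs exactly three times positively and three times negatively (padding if necessary). Mapping variable $x_k$ to block $k$ and clause $C_j$ to the pair $(s_j,s_j')$, I would build $T_1$ by adding $(w_{3k+r},s_j,s_j')$ for the $r$-th positive occurrence of $x_k$ in $C_j$ and $(\bar{w}_{3k+r},s_j,s_j')$ for the $r$-th negative occurrence. Under the block dichotomy, a perfect matching then corresponds precisely to a truth assignment together with a choice of exactly one satisfied literal per clause --- i.e., a one-in-three satisfying assignment --- and conversely.

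The main obstacle is the combinatorial bookkeeping ensuring that the counts match: we need exactly $3n$ clauses and a uniform literal-occurrence pattern so that every $w_i$ and $\bar{w}_i$ freed by $T_2$ is used exactly once in $T_1$. Either invoking an NP-complete source problem that already has this structure, or padding a generic one-in-three 3SAT instance with dummy variables and clauses while preserving satisfiability, is the delicate technical step, and is where I expect the construction of~\cite{ETH} to play a role.
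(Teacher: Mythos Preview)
Your block-dichotomy observation about $T_2$ is correct and is the crux of the argument. The gap lies in the encoding you build on top of it. Under your semantics (block $=$ Boolean variable, index within block $=$ literal occurrence), a perfect matching forces every freed $w$- or $\bar w$-element to match its unique associated clause, so you do obtain a one-in-three constraint --- but the arithmetic then collapses the source problem. With $n$ variables, each occurring three times positively and three times negatively, there are $6n$ literal occurrences; since $|X|=6n$ already contains $3n$ elements $a_i$, there are exactly $3n$ pairs $(s_j,s_j')$, so the clauses have two literals on average. One-in-three SAT on two-literal clauses is XOR-SAT, solvable in polynomial time by Gaussian elimination, so the restricted source you posit is not NP-hard. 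Padding does not rescue this: inserting garbage pairs with triples to every $w_i,\bar w_i$ lets surplus freed elements escape to dummies, which destroys the exactly-one-true-literal property and silently turns the reduction into one from ordinary SAT.

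That is essentially what the paper does, but with a different semantics for the indices. It reduces from plain 3SAT, applies Tovey's transformation \emph{twice}, and interprets each index $i\in\{1,\dots,3n\}$ as a \emph{separate variable} $\hat z_i$ of the final instance $I_{sat}''$, not as an occurrence. The second Tovey pass groups the $\hat z_i$ into triples that must share a truth value, and those equality constraints are what the fixed set $T_2$ encodes --- they never appear as $s$-pairs. Each $\hat z_i$ then occurs exactly once in the remaining clause set $C_1$; the $|C_1|$ real pairs $(s_j,s_j')$ encode those clauses, and the remaining $3n-|C_1|$ pairs are garbage collectors with triples to every $w_i$ and $\bar w_i$. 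Because the target is ordinary satisfiability, the garbage collectors are harmless: they simply absorb whichever of $w_i,\bar w_i$ is freed by the block choice but not needed to witness any clause.
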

\begin{proof}
We reduce 3SAT to 3DM$^\prime$. Given an instance of 3SAT, say,
$I_{sat}$, we first apply Tovey's method~\cite{tovey1984simplified}
to alter it into $I_{sat}'$ so that every variable appears exactly
three times. It is simple, if a variable, say, $z$, only appear
once, then we add a dummy clause $(z\vee\neg z)$. Otherwise it
appears $d\ge 2$ times, and we replace all its occurrences by new
variables $z_1$, $z_2$, $\cdots$, $z_d$, one for each, and meanwhile
add clauses $(z_1\vee\neg z_2)$, $(z_2\vee\neg z_3)$, $\cdots$,
$(z_{d}\vee\neg z_1)$ to enforce that $z_1$ to $z_d$ are taking the
same truth value. Let $I_{sat}'$ be the new SAT instance, then we
have
\begin{itemize}
\item Every variable appears exactly three times in $I_{sat}'$.
\item $I_{sat}'$
is satisfiable if and only if $I_{sat}$ is satisfiable.
\end{itemize}

We apply Tovey's method for a second time to transform $I_{sat}'$ to
$I_{sat}''$. Since every variable, say, $z_i$, appears three times
in $I_{sat}'$, we replace its three occurrences with
$\hat{z}_{3i-2}$, $\hat{z}_{3i-1}$ and $\hat{z}_{3i}$, and meanwhile
add $(\hat{z}_{3i-2}\vee\neg \hat{z}_{3i-1})$,
$(\hat{z}_{3i-1}\vee\neg \hat{z}_{3i})$, $(\hat{z}_{3i}\vee\neg
\hat{z}_{3i-2})$. It is not difficult to verify that $I_{sat}''$
satisfies the following conditions:

\begin{itemize}
\item Clauses of $I_{sat}''$ could be divided into $C_1$ and $C_2$
such that
\begin{itemize}
\item either $z_i$ or $\neg z_i$ appears in $C_1$, and it appears
once
\item all the clauses of $C_2$ could be listed as
$(z_{3i-2}\vee\neg z_{3i-1})$, $(z_{3i-1}\vee\neg z_{3i})$,
$(z_{3i}\vee\neg z_{3i-2})$ for $i=1,\cdots,n$.
\end{itemize}
\item $I_{sat}''$
is satisfiable if and only if $I_{sat}'$ is satisfiable.
\end{itemize}

We construct an instance $I_{3dm}$ (of 3DM$^\prime$) such that it
admits a perfect matching if and only if $I_{sat}''$ is satisfiable,
and thus if and only if $I_{sat}$ is satisfiable.

\noindent\textbf{Real elements:} we construct $w_i\in W$ for every
positive literal $z_i$, and $\bar{w}_i\in W$ for negative literal
$\neg z_i$. We construct $s_j\in X$ and $s_j'\in Y$ for every clause
$\beta_j\in C_1$.

\noindent\textbf{Dummy elements:} we construct $a_i\in X,b_i\in Y$
for $1\le i\le 3n$, and $u_j\in X$, $u_j'\in Y$ for $1\le j\le
3n-|C_1|$ (here $|C_1|$ is the number of clauses in $C_1$). Thus in
all, $|W|=|X|=|Y|=6n$.

\noindent\textbf{Real matches:} if the positive literal $z_i$ is in
clause $\beta_j\in C_1$, we construct $(w_i,s_j,s_j')$. Else if $\neg
z_i$ is in $\beta_j$, we construct $(\bar{w}_i,s_j,s_j')$.

\noindent\textbf{Dummy matches:} we construct $(w_i,a_i,b_i)$ and
$(\bar{w}_i,a_i,b_{\zeta{(i)}})$ for $i=1,\cdots,3n$. We also
construct $(w_i,u_j,u_j')$ and $(\bar{w}_i,u_j,u_j')$ for every
$1\le i\le 3n$ and $1\le j\le 3n-|C_1|$.

It is easy to see that the above instance is an instance of
3DM$^\prime$ where $W=\{w_i,\bar{w}_i|1\le i\le 3n\}$,
$X=\{a_i,s_j,u_k|1\le i\le 3n, 1\le j\le |C_1|, 1\le k\le
3n-|C_1|\}$, $Y=\{b_i,s_j',u_k'|1\le i\le 3n, 1\le j\le |C_1|, 1\le
k\le 3n-|C_1|\}$, $T_2$ is a subset of dummy matches (i.e.,
$(w_i,a_i,b_i)$ and $(\bar{w}_i,a_i,b_{\zeta{(i)}})$), and $T_1$ is
the set of remaining dummy matches (i.e., $(w_i,u_j,u_j')$ and
$(\bar{w}_i,u_j,u_j')$) together with all the real matches.

\noindent\textbf{Completeness.} Suppose $I_{sat}''$ is satisfiable,
we prove that $I_{3dm}$ admits a perfect matching by selecting them
out from $T_1\cup T_2$. Since clauses $(z_{3i+1}\vee\neg z_{3i+2})$,
$(z_{3i+2}\vee\neg z_{3i+3})$ and $(z_{3i+3}\vee\neg z_{3i})$ are
all satisfied for $0\le i\le n-1$, the variables $z_{3i+1}$,
$z_{3i+2}$, $z_{3i+3}$ should be all true or all false. If they are
all false, we take out $(w_{3i+1},a_{3i+1},b_{3i+1})$,
$(w_{3i+2},a_{3i+2},b_{3i+2})$ and $(w_{3i+3},a_{3i+3},b_{3i+3})$
from $T_2$. Otherwise we take out
$(\bar{w}_{3i+1},a_{3i+1},b_{3i+2})$,
$(\bar{w}_{3i+2},a_{3i+2},b_{3i+3})$,
$(\bar{w}_{3i+3},a_{3i+3},b_{3i+1})$ from $T_2$ instead. Since each
clause $\beta_j\in C_1$ is satisfied, it is satisfied by at least one
literal in it. Suppose it is satisfied by the literal $z_i$ (or
$\neg z_i$), then the variable $z_i$ is true (or false), and we
select $(w_i,s_j,s_j')$ (or $(\bar{w}_i,s_j,s_j')$) from $T_1$. Now
consider all the matches we select out so far. We have selected
$3n+|C_1|$ matches, and among them every $s_j$, $s_j'$, $a_i$, $b_i$
appear once, and every $w_i$ (or $\bar{w}_i$) appears at most once.
Thus, there are $3n-|C_1|$ elements of $W$ that do not appear in
these matches, and we select $3n-|C_1|$ dummy matches from $T_1$ so
that every $u_j$, $u_j'$ and $w_i$, $\bar{w}_i$ appear once.

\noindent\textbf{Soundness.} Suppose there exists a perfect matching
of $I_{3dm}$, say, $T'\subseteq T_1\cup T_2$, we prove that
$I_{sat}''$ is satisfiable.

Consider elements of $X$ and $Y$. For each $0\le i\le n-1$, to
ensure that $a_{3i+1}$, $b_{3i+1}$, $a_{3i+2}$, $b_{3i+2}$ and
$a_{3i+3}$, $b_{3i+3}$ appear once respectively, in the perfect
matching $T'$ we have to choose either
$(\bar{w}_{3i+1},a_{3i+1},b_{3i+2})$,
$(\bar{w}_{3i+2},a_{3i+2},b_{3i+3})$,
$(\bar{w}_{3i+3},a_{3i+3},b_{3i+1})$, or choose
$(w_{3i+1},a_{3i+1},b_{3i+1})$, $(w_{3i+2},a_{3i+2},b_{3i+2})$,
$(w_{3i+3},a_{3i+3},b_{3i+3})$.

If $(\bar{w}_{3i+1},a_{3i+1},b_{3i+2})$,
$(\bar{w}_{3i+2},a_{3i+2},b_{3i+3})$,
$(\bar{w}_{3i+3},a_{3i+3},b_{3i+1})$ are in $T'$, we let variables
$z_{3i+1}$, $z_{3i+2}$ and $z_{3i+3}$ be true. Otherwise we let
variables $z_{3i+1}$, $z_{3i+2}$ and $z_{3i+3}$ be false. It can be
easily verified that every clause of $C_2$ is satisfied.

We consider $\beta_j\in C_1$. Notice that $s_j\in X$ appears once in
$T'$. Suppose the match containing $s_j$ is $(w_i,s_j,s_j')$ for
some $i$, then it follows that the positive literal $z_i\in \beta_j$.
The fact that $w_i$ can only appear once in $T'$ implies that
$(w_i,a_i,b_i)$ is not in $T'$, and thus the variable $z_i$ is true
and $\beta_j$ is satisfied. Otherwise the matching containing $s_j$ is
$(\bar{w}_i,s_j,s_j')$ for some $i$, and similar arguments show that
the negative literal $\neg z_i\in \beta_j$ and variable $z_i$ is false,
again $\beta_j$ is satisfied.
\end{proof}

\subsection{Factor 3/2 hardness for rank 4 scheduling}\label{sec:r4}
We prove the following theorem.
\begin{theorem}
For every $\epsilon>0$, there is no $(3/2-\epsilon)$-approximation
algorithm for $LRS(4)$, assuming $P\neq NP$.
\end{theorem}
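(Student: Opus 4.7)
The plan is to reduce 3DM$'$ (proven NP-complete by Theorem~1) to the threshold version of $LRS(4)$. Given an instance with element set $W\cup X\cup Y$ of cardinality $18n$ and match set $T_1\cup T_2$, I would create one machine per match in $T_1\cup T_2$ and one ``element job'' for each of the $18n$ elements, and design $4$-dimensional speed and size vectors so that the processing time of an element-job on a machine equals a constant $p$ whenever the element lies in the triple indexing the machine, and is at least $9p/2$ otherwise. A perfect matching $M$ then directly produces a schedule of makespan exactly $3p$: assign the three element-jobs of each $t\in M$ to the machine for $t$, leaving every other machine idle. Conversely, since each machine's triple contains only three elements and every element-job must be scheduled, the absence of a perfect matching forces at least one element-job onto a foreign machine, giving that machine load $\geq 9p/2=(3/2)\cdot 3p$, which is precisely the gap yielding $3/2-\epsilon$ inapproximability.

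Designing the vectors is where the rigid structure of 3DM$'$ is exploited. The machines split into the ``flexible'' family from $T_1$, whose matches all share the form $(w_i\text{ or }\bar w_i,\,s_j,\,s_j')$, and the ``rigid'' family from $T_2$, whose matches have the tightly prescribed form $(w_i,a_i,b_i)$ or $(\bar w_i,a_i,b_{\zeta(i)})$. The four coordinates of the vectors can be allocated roughly to the $w_i/\bar w_i$ polarity, to the $\{s_j,s_j'\}$ pair-index of a $T_1$ match, and to the $\{a_i,b_i\}$ index of a $T_2$ match. The cyclic permutation $\zeta$, together with the fact that $a_i$ is paired only with $b_i$ or $b_{\zeta(i)}$, allows what would otherwise require separate ``$a$-coordinate'' and ``$b$-coordinate'' to collapse into a single scalar index modulo $3$; this is precisely the saving over~\cite{low} that brings the ambient dimension down from $5$ to $4$ and is made possible only because we reduce from 3DM$'$ rather than from standard 3DM.

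The key technical work will be exhibiting explicit speed and size vectors whose pairwise inner products realize the $p$ versus $\geq 9p/2$ gap on every machine--element pair. The easy cases are a $T_1$-machine against an element of its own triple, and similarly for $T_2$; these reduce to direct computations. The delicate cases are the ``cross'' inner products of a $T_1$-machine against an element appearing only in $T_2$-matches and vice versa, which must all land above $9p/2$ while every on-match value stays pinned at $p$, and it is here that the compression afforded by the $\zeta$-cyclicity of $T_2$ is essential. Once the vectors are fixed, both completeness (a matching yields a schedule) and soundness (a schedule of makespan below $(3/2)\cdot 3p$ yields a matching, because no foreign job can be placed without overshooting the threshold, so the used machines must partition the $18n$ elements into triples) reduce to routine pigeonhole arguments.
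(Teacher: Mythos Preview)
Your soundness argument has a fatal gap. You claim that ``the absence of a perfect matching forces at least one element-job onto a foreign machine,'' but this is simply false: every element of $W\cup X\cup Y$ lies in at least one match, so one can always place every element-job on \emph{some} home machine and obtain makespan at most $3p$, regardless of whether a perfect matching exists. (For instance, put $w_i$ on the machine for $(w_i,s_1,s_1')$, put $s_1$ on the machine for $(w_2,s_1,s_1')$, etc.; each machine receives at most three home jobs and there is no constraint linking the choices.) Hence a schedule of makespan $3p$ does not certify a perfect matching, and the $3p$ versus $9p/2$ gap collapses.

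The paper avoids this by \emph{not} creating one job per element. Only the elements of $X\cup Y$ become ``real'' jobs of processing time $\approx 1$; for each $w_i$ (and $\bar w_i$) one instead creates $d(w_i)-1$ identical dummy jobs of processing time $2$ on $w_i$-machines. In any schedule of makespan strictly below~$3$, a dummy job (cost~$2$) cannot share a machine with any other job (cost~$\ge 1$), so the $d(w_i)-1$ dummy jobs occupy all but one of the machines whose match contains $w_i$. The $6n$ surviving machines then contain each $w_i,\bar w_i$ exactly once, and since all $12n$ real $X\cup Y$-jobs must land on them, a counting argument forces these $6n$ matches to be a perfect matching. The target gap is therefore $2+O(\epsilon)$ versus $3$, not $3p$ versus $9p/2$. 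Your plan is missing precisely this dummy-job mechanism, and without it the reduction cannot be completed; the vector-design issues you highlight are secondary to this structural defect.
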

\begin{proof}
Given an instance $I_{3dm}$ of 3DM$^\prime$, we construct an
instance $I_{sch}$ of $LRS(4)$ such that if $I_{3dm}$ admits a
perfect matching, then there exists a feasible schedule of $I_{sch}$
with makespan $2+O(\epsilon)$ (where $\epsilon<1/6$ is an arbitrary
positive number), otherwise there is no feasible schedule of
makespan less than $3$.

We construct $I_{sch}$ that consists of the following parts.
\begin{itemize}
\item Machines: there are $|T|=|T_1\cup T_2|$ machines, one for every match.
\item Real jobs: there is one job for each element of $X\cup Y$.
\item Dummy jobs: if $w_i$ (or $\bar{w}_i$)
appears $d(w_i)$ (or $d(\bar{w}_i)$) times in all the matches, then
there are $d(w_i)-1$ (or $d(\bar{w}_i)-1$) jobs for $w_i$ (or
$\bar{w}_i$).
\end{itemize}

Let $N=O(n/\epsilon^2)$. We aim to design the speeds of machines and
the size (workload) of jobs such that if an element job is put on a
match machine whose corresponding match contains this element, then
its processing time is $1+O(\epsilon)$, otherwise the processing
time is at least $1/(2\epsilon)>3$.

See Table~\ref{table:machine} as the speed vectors of machines, and
Table~\ref{table:job} as the size vector of jobs. Here for
simplicity we use a match to denote its corresponding machine, and
an element to denote its corresponding job. Recall that the
processing time of a job on a machine is defined to be the inner
product of their corresponding vectors.
\begin{table}[!hbp]
\begin{center}
\caption{Speed Vectors of Machines}
\begin{tabular}{|c|c|}
\hline Machines & Speeds\\
\hline $(w_i,s_j,s_j')$ & $(N^i,N^{-i},N^{j+N},N^{-j-N})$\\
\hline $(\bar{w}_i,s_j,s_j')$ & $(N^{-i},N^{i},N^{j+N},N^{-j-N})$\\
\hline $(w_{3i},a_{3i},b_{3i})$ & $(N^{3i},N^{-3i},N^{-3i},N^{-3i-1})$ \\
\hline $(w_{3i+1},a_{3i+1},b_{3i+1})$ & $(N^{3i+1},N^{-3i-1},N^{-3i-1},2N^{-3i})$ \\
\hline $(w_{3i+2},a_{3i+2},b_{3i+2})$ & $(N^{3i+2},N^{-3i-2},N^{-3i-2},2/\epsilon N^{-3i-1})$ \\
\hline $(\bar{w}_{3i},a_{3i},b_{3i+1})$ & $(N^{-3i},N^{3i},\epsilon N^{-3i},N^{-3i})$ \\
\hline $(\bar{w}_{3i+1},a_{3i+1},b_{3i+2})$ & $(N^{-3i-1},N^{3i+1}, N^{-3i-1},1/\epsilon N^{-3i-1})$ \\
\hline $(\bar{w}_{3i+2},a_{3i+2},b_{3i})$ & $(N^{-3i-2},N^{3i+2}, N^{-3i-2},2N^{-3i-1})$ \\
 \hline
\end{tabular}
\label{table:machine}
\end{center}
\end{table}

\begin{table}[!hbp]
\begin{center}
\caption{Size Vectors of Jobs}
\begin{tabular}{|c|c|}
\hline Jobs & Sizes\\
\hline $w_i$ & $(N^{-i},N^{i},0,0)$\\
\hline $\bar{w}_i$ & $(N^{i},N^{-i},0,0)$\\
\hline $s_j$ & $(0,0,1/2N^{-j-N},1/2N^{j+N})$\\
\hline $s_j'$ & $(0,0,1/2N^{-j-N},1/2N^{j+N})$\\
\hline $a_i$ & $(N^{-i},N^{-i},\epsilon N^{i},0)$ \\
\hline $b_{3i}$ & $(\epsilon N^{-3i},\epsilon N^{-3i-2},1/2N^{3i},1/2N^{3i+1})$ \\
\hline $b_{3i+1}$ & $(\epsilon N^{-3i-1},1/\epsilon N^{-3i-1},1/(2\epsilon)N^{3i},1/2 N^{3i})$ \\
\hline $b_{3i+2}$ & $(\epsilon N^{-3i-2},\epsilon N^{-3i-1}, 1/2N^{3i+1},\epsilon/2 N^{3i+1})$ \\
 \hline
\end{tabular}
\label{table:job}
\end{center}
\end{table}

We check the processing times of jobs on different machines. The
following observation is easy to verify (by focusing on the first
three coordinates of vectors).
\begin{itemize}
\item A $w_i$-job (or $\bar{w}_i$-job) has a processing time of $2$
on a machine whose corresponding match contains $w_i$ (or
$\bar{w}_i$), and has a processing time of $\Omega(N)$ on other
machines.
\item An $s_j$-job ($s_j'$-job) has a processing time of $1$ on a
machine whose corresponding match contains $s_j$ ($s_j'$), and has a
processing time of $\Omega(N)$ on other machines.
\item An $a_i$-job has a processing time of $1+O(\epsilon)$ on a
machine whose corresponding match contains $a_i$, and has a
processing time of $\Omega(\epsilon N)$ on other machines.
\end{itemize}

For $b_i$-jobs, the reader may refer to Table~\ref{table:b-job} for
their processing times.

\begin{table}[!hbp]
\begin{center}
\caption{Processing times of $b_i$-jobs}
\begin{tabular}{|c|c|c|c|}
\hline Machines/Jobs & $b_{3i}$ & $b_{3i+1}$ & $b_{3i+2}$\\
\hline $(w_{3i},a_{3i},b_{3i})$ & $1+O(\epsilon)$ & $1/(2\epsilon)+O(\epsilon)$ & $\Omega(N)$\\
\hline $(w_{3i+1},a_{3i+1},b_{3i+1})$ & $\Omega(N)$ & $1+O(\epsilon)$ & $\Omega(\epsilon N)$\\
\hline $(w_{3i+2},a_{3i+2},b_{3i+2})$ & $\Omega(N)$ & $\Omega(\epsilon N)$ & $1+O(\epsilon)$\\
\hline $(\bar{w}_{3i},a_{3i},b_{3i+1})$ & $\Omega(N)$ & $1+O(\epsilon)$ & $\Omega(\epsilon N)$\\
\hline $(\bar{w}_{3i+1},a_{3i+1},b_{3i+2})$ & $\Omega(N)$ & $1/(2\epsilon)+O(\epsilon)$ & $1+O(\epsilon)$\\
\hline $(\bar{w}_{3i+2},a_{3i+2},b_{3i})$ & $1+O(\epsilon)$ & $\Omega(N)$ & $\Omega(\epsilon N)$\\
\hline Other machines & $\Omega(\epsilon N)$ & $\Omega(\epsilon N)$ & $\Omega(\epsilon N)$\\
 \hline
\end{tabular}
\label{table:b-job}
\end{center}
\end{table}

\noindent\textbf{Completeness.} Suppose $I_{3dm}$ admits a perfect
matching, we prove that $I_{sch}$ admits a feasible schedule of
makespan $2+O(\epsilon)$. We let $M_1$ be the set of machines
corresponding to the matches of the perfect matching, and $M_2$ be
the set of remaining machines. We provide a schedule in which all
the real jobs are put onto machines of $M_1$, while all the dummy
jobs are put onto machines of $M_2$. Since every element of $X\cup
Y$ appears once in the perfect matching, we put an $s_j$-job onto a
machine of $M_1$ whose corresponding match contains $s_j$.
$s_j'$-jobs, $a_i$-jobs and $b_i$-jobs are scheduled in the same
way. It is easy to see that the load of each machine in $M_1$ is
$2+O(\epsilon)$. Meanwhile, every $w_i$ (or $\bar{w}_i$) also
appears once in the perfect matching, thus it appears $d(w_i)-1$ (or
$d(\bar{w}_i)-1$) times in the remaining matches. Notice that the
number of $w_i$-jobs (or $\bar{w}_i$-jobs) is $d(w_i)-1$ (or
$d(\bar{w}_i)-1$), thus we can put one $w_i$-job (or
$\bar{w}_i$-job) onto a machine in $M_2$ whose corresponding match
contains $w_i$ (or $\bar{w}_i$), and again it is easy to see that
the load of each machine in $M_2$ is at most $2$.

\noindent\textbf{Soundness.} Suppose $I_{sch}$ admits a feasible
schedule of makespan strictly less than $3$, we prove that $I_{3dm}$
admits a perfect matching.

As $1/(2\epsilon)>3$, according to our discussion on the processing
times we know every element job should be on a machine whose
corresponding match contains this element. Notice that the
processing time of a $w_i$-job (or $\bar{w}_i$-job) is at least 2,
while the processing time of an $a_i$-job, $b_i$-job, $s_j$-job or
$s_j'$-job is at least 1, thus every $w_i$-job (or $\bar{w}_i$-job)
occupies one machine, and there are no other jobs on this machine.
Let $M_2$ be the set of machines where $w_i$-jobs and
$\bar{w}_i$-jobs are scheduled, and let $M_1$ be the set of
remaining machines, we show that the matches corresponding to
machines in $M_1$ forms a perfect matching. Let $T'$ be the set of
these matches. Notice that the number of $w_i$-jobs (or
$\bar{w}_i$-jobs) is $d(w_i)-1$ (or $d(\bar{w}_i)-1$), thus in $T'$
every $w_i$ (or $\bar{w}_i$) appears exactly once, which implies
that $|T'|=6n$. Furthermore, all the jobs corresponding to elements
of $X\cup Y$ are on machines of $M_1$, thus every element of $X\cup
Y$ appears at least once in $T'$. Notice that $|W|=|X|=|Y|=6n$,
since every element of $W\cup X\cup Y$ appears at least once in $T'$
and $|T'|=6n$, we conclude that every element of $W\cup X\cup Y$
appears exactly once in $T'$, implying that $T'$ is a perfect
matching.
\end{proof}

\section{APX-hardness for rank 3 scheduling}
We start with the one-in-three 3SAT problem. It is a variation of
the 3SAT problem. Precisely speaking, an input of the one-in-three
3SAT is a collection of clauses where each clause consists of
exactly three literals, and the problem is to determine whether
there exists a truth assignment of the variables such that each
clause is satisfied by exactly one literal (i.e., one literal is
true and two other literals are false).

It is proved in \cite{one-in-three} that the one-in-three 3SAT
problem is NP-complete.

Given an instance of the one-in-three 3SAT problem, say, $I_{sat}$,
we can apply Tovey's method to transform it into $I_{sat}'$ such
that

\begin{itemize}
\item Each clause of $I_{sat}$ contains two or three literals
\item Each variable appears three times in clauses, among the three occurrence there
are either two positive literals and one negative literal, or one
positive literal and two negative literals
\item There exists a truth assignment for $I_{sat}'$ where every
clause is satisfied by exactly one literal if and only if there is a
truth assignment for $I_{sat}$ where every clause is satisfied by
exactly one literal
\end{itemize}

The transformation is straightforward. For any variable $z$, if it only appears once in the clauses,
then we add a dummy clause as $(z\vee\neg z)$. Otherwise suppose it appears $d\ge 2$ times in the clauses,
then we replace its $d$ occurrences with $d$ new variables as $z_1$, $z_2$, $\cdots$, $z_d$, and meanwhile
add $d$ clauses as $(z_1\vee\neg z_2)$, $(z_2\vee\neg z_3)$, $\cdots$, $(z_{d}\vee\neg z_1)$ to enforce that
these new variables should take the same truth assignment. It is not difficult to verify that the constructed
instance satisfies the above requirements.

Let $\epsilon$ be an arbitrary small positive.
Throughout the following part of this section we assume that
$I_{sat}'$ contains $n$ variables and $m$ clauses, and let
$\xi=2^3$, $r=2^{10}\xi=2^{13}$, $N=n/\epsilon^2$.
We will construct a scheduling instance $I_{sch}$ in the following part of this section such that
if there exists a truth assignment for $I_{sat}'$ where every clause is satisfied by exactly one
literal, then $I_{sch}$ admits a feasible schedule whose makespan is $r+O(\epsilon)$. On the
other hand if $I_{sch}$ admits a feasible schedule whose makespan is strictly less
than $r+1$, then there exists a truth assignment for $I_{sat}'$ where every clause is satisfied by exactly one
literal. This would be enough to prove that an algorithm of approximation ratio strictly less than
$1+1/(r+1)=1+1/(2^{13}+1)$ implies that $P=NP$.

\subsection{Construction of the scheduling instance}
We construct jobs.

For every variable $z_i$, eight variable jobs are constructed,
namely $v_{i,k}^{\gamma}$ for $k=1,2,3,4$ and $\gamma=T,F$. The size
vectors are (for simplicity, we use $s(j)$ to denote the size vector
of job $j$):

$$s(v_{i,1}^T)=(\epsilon N^{4i+1},0,1/8r-10\xi-2), s(v_{i,2}^T)=(\epsilon N^{4i+2},0,1/8r-20\xi-2),$$
$$s(v_{i,3}^T)=(\epsilon N^{4i+3},0,1/8r-18\xi-2), s(v_{i,4}^T)=(\epsilon N^{4i+4},0,1/8r-12\xi-2).$$
$$s(v_{i,k}^F)=s(v_{i,k}^T)-(0,0,2), k=1,2,3,4$$

For every variable $z_i$, eight truth-assignment jobs jobs are
constructed, namely $a_i^{\gamma}$, $b_i^{\gamma}$, $c_i^{\gamma}$
$d_i^{\gamma}$ with $\gamma=T,F$. The size vectors are:

$$s(a_i^T)=(0,\epsilon N^i,2\xi+1), s(b_i^T)=(0,\epsilon N^i,4\xi+1),$$
$$s(c_i^T)=(0,\epsilon N^i,8\xi+1), s(d_i^T)=(0,\epsilon N^i,16\xi+1).$$
$$s(\tau_i^F)=s(\tau_i^T)+(0,0,1), \tau=a,b,c,d$$

For every clause $\beta_j$, if it contains two literals, then we
construct two clause jobs, namely $u_j^T$ and $u_j^F$. Otherwise it
contains three literals, and we construct three clause jobs, namely
one $u_j^T$ and two $u_j^F$. The size vectors are:

$$s(u_j^T)=(0,\epsilon N^{N+j}, 1/4r+2), s(u_j^F)=(0,\epsilon N^{N+j}, 1/4r+4).$$

We construct $2n-m$ true dummy jobs $\phi^T=(0,0,1/16r+2)$, and
$m-n$ false dummy jobs $\phi^F=(0,0,1/16r+4)$ (here it is not
difficult to verify that $n\le m$).

Finally we construct huge jobs. Indeed, there is a one-to-one
correspondence between huge jobs and machines. For ease of
description we first construct machines, and then construct those
huge jobs.

We construct $8n$ machines.

For every variable $z_i$, we construct $4n$ truth assignment
machines, and they are denoted as $(v_{i,1},a_i,c_i)$,
$(v_{i,2},b_i,d_i)$, $(v_{i,3},a_i,d_i)$, $(v_{i,4},b_i,c_i)$. The
symbol of a machine indicates the jobs on it (except the huge jobs)
in the solution with makespan at most $r+2\epsilon$. The speed
vectors are (For simplicity the speed vector of a machine, say,
$(v_{i,1},a_i,c_i)$, is denoted as $g(v_{i,1},a_i,c_i)$):

$$g(v_{i,1},a_i,c_i)=(N^{-4i-1},N^{-i},1),g(v_{i,2},b_i,d_i)=(N^{-4i-2},N^{-i},1),$$
$$g(v_{i,3},a_i,d_i)=(N^{-4i-3},N^{-i},1),g(v_{i,4},b_i,c_i)=(N^{-4i-4},N^{-i},1).$$

For every clause $\beta_j$, if the positive (or negative) literal $z_i$
(or $\neg z_i$) appears in it for the first time (i.e., it does not
appear in $\beta_k$ for $k<j$), then we construct a clause machine
$(v_{i,1},u_j)$ (or $(v_{i,3},u_j)$). Else if it appears for the
second time, then we construct a clause machine $(v_{i,2},u_j)$ (or
$(v_{i,4},u_j)$). The speed vectors are:
$$g(v_{i,k},u_j)=(N^{-4i-k},N^{-N-j},1).$$

Recall that for every variable, in all the clauses there are either
one positive literal and two negative literals, or two positive
literals and one negative literal. If $z_i$ appears once and $\neg
z_i$ appears twice, then we construct a dummy machine
$(v_{i,2},\phi)$, otherwise we construct a dummy machine
$(v_{i,4},\phi)$. The speed vectors are:
$$g(v_{i,2},\phi)=(N^{-4i-2},0,1), g(v_{i,4},\phi)=(N^{-4i-4},0,1).$$

According to our construction, it is not difficult to verify that if
$z_i$ appears once and $\neg z_i$ appears twice, then we construct
machines $(v_{i,k},u_{j_k})$ for $k=1,3,4$, $1\le j_k\le m$, and
machine $(v_{i,2},\phi)$. Otherwise we construct machines
$(v_{i,k},u_{j_k})$ for $k=1,2,3$, $1\le j_k\le m$, and machine
$(v_{i,4},\phi)$.

We now describe the huge jobs. There is one huge job for each
machine and for simplicity, we also use the symbol of a machine to
denote its corresponding huge job. The size vectors are:

$$s(v_{i,1},a_i,c_i)=(\epsilon N^{4i+1},\epsilon N^{i},7/8r),s(v_{i,2},b_i,d_i)=(\epsilon N^{4i+2},\epsilon N^{i},7/8r),$$
$$s(v_{i,3},a_i,d_i)=(\epsilon N^{4i+3},\epsilon N^{i},7/8r),s(v_{i,4},b_i,c_i)=(\epsilon N^{4i+4},\epsilon N^{i},7/8r).$$
$$s(v_{i,1},u_j)=(0,\epsilon N^{N+j},5/8r+10\xi), s(v_{i,2},u_j)=(0,\epsilon N^{N+j},5/8r+20\xi),$$
$$s(v_{i,3},u_j)=(0,\epsilon N^{N+j},5/8r+18\xi), s(v_{i,1},u_j)=(0,\epsilon N^{N+j},5/8r+12\xi).$$
$$s(v_{i,2},\phi)=(0,N^{2N},13/16r+20\xi), s(v_{i,4},\phi)=(0,N^{2N},13/16r+12\xi).$$

\subsection{From 3SAT to Scheduling}
Given a truth assignment of $I_{sat}'$, we schedule jobs according to Table~\ref{ta:overview}.
\begin{table}[!hbp]
\begin{center}
\caption{Overview of jobs}\label{ta:overview}
\begin{tabular}{|c|c|}
\hline machines & jobs\\
\hline
$(v_{i,1},a_i,c_i)$ & $v_{i,1}$, $a_i$, $c_i$, $(v_{i,1},a_i,c_i)$\\
\hline
$(v_{i,2},b_i,d_i)$ & $v_{i,2}$, $b_i$, $d_i$, $(v_{i,2},b_i,d_i)$\\
\hline
$(v_{i,3},a_i,d_i)$ & $v_{i,3}$, $a_i$, $d_i$, $(v_{i,3},a_i,d_i)$\\
\hline
$(v_{i,4},b_i,c_i)$ & $v_{i,4}$, $b_i$, $c_i$, $(v_{i,4},b_i,c_i)$\\
\hline
$(v_{i,k},u_j)$ & $v_{i,k}$, $u_j$, $(v_{i,k},u_j)$\\
\hline
$(v_{i,k},\phi)$ & $v_{i,k}$, $\phi$, $(v_{i,k},\phi)$\\
\hline
\end{tabular}
\end{center}
\end{table}

Recall that except for the huge jobs, the symbol of a job, say, $a_i$, may represent either $a_i^T$ or
$a_i^F$, we determine whether each job in the above table is true or false according to the truth assignment of
variables.

If variable $z_i$ is false,
then we schedule jobs on truth assignment machines as $(v_{i,1}^T,a_i^T,c_i^T)$,
$(v_{i,2}^T,b_i^T,d_i^T)$, $(v_{i,3}^F,a_i^F,d_i^F)$, $(v_{i,4}^F,a_i^F,c_i^F)$, otherwise we schedule jobs as $(v_{i,1}^F,a_i^F,c_i^F)$,
$(v_{i,2}^F,b_i^F,d_i^F)$, $(v_{i,3}^T,a_i^T,d_i^T)$, $(v_{i,4}^T,a_i^T,c_i^T)$.

Notice that every clause, say, $\beta_j$, is satisfied by exactly one literal. Suppose it contains three variables (the
argument is the same if it contains two literals), namely, $z_{i_1}$, $z_{i_2}$
and $z_{i_3}$ and is satisfied by the first variable.

Consider variable $z_{i_1}$. According to the construction of machines if $\beta_j$ contains
its positive literal then machine $(v_{i_1,k_1},u_j)$ is constructed for $k_1\in\{1,2\}$, and we schedule $u_j^T$ and $v_{i_1,k_1}^T$ on this machine.
This is possible since variable $z_{i_1}$ is true, and $v_{i_1,k_1}^T$ is thus not scheduled with truth assignment jobs. Similarly if $\beta_j$ contains
the negative literal $\neg z_{i_1}$, then the satisfaction of $\beta_j$ by variable $z_{i_1}$ implies that this variable is false. Furthermore, machine
$(v_{i_1,k_1},u_j)$ is constructed for $k_1\in\{3,4\}$ and again we schedule jobs $u_j^T$ and $v_{i_1,k_1}^T$ on this machine.

Consider variable $z_{i_2}$ (for variable $z_{i_3}$ the argument is the same). Again if $\beta_j$ contains
its positive literal then machine $(v_{i_2,k_2},u_j)$ is constructed for $k_2\in\{1,2\}$, and we schedule $u_j^F$ and $v_{i_1,k_1}^F$ on it. This is possible
since $\beta_j$ is not satisfied by literal $z_{i_2}$ and the variable $z_{i_2}$ is thus false, meaning that $v_{i_2,k_2}^F$ is not scheduled with truth assignment
jobs. Else if $\beta_j$ contains the negative literal $\neg z_{i_2}$, then machine $(v_{i_2,k_2},u_j)$ for $k_2\in\{3,4\}$ is
constructed and the variable $z_{i_2}$ is true, we schedule jobs $u_j^F$ and $v_{i_2,k_2}^F$ on this machine.

It is not difficult to verify that by scheduling in the above way, the load of every truth assignment machine and clause machine
is $r+O(\epsilon)$, and furthermore, for every $i$, 7 jobs out of $v_{i,k}^{\gamma}$ are scheduled on these machines where $k=1,2,3,4$ and $\gamma=T,F$. If the positive literal $z_i$ appears in clauses for once and $\neg z_i$ for twice, then the job $v_{i,2}$ is not scheduled. Otherwise if $z_i$ appears for twice while $\neg z_i$ for once, the job $v_{i,4}$ is left. These jobs are scheduled on
dummy machines according to Table~\ref{ta:overview}. Notice that there are in all $4n$ true variable jobs, among them $2n$ ones are on truth assignment machines, $m$ are on
clause machines (as $u_j^T$ is with a true variable job, and $u_j^F$ is with with a false one), thus $2n-m$ true ones are on dummy machines. Recall that there
are $2n-m$ true dummy jobs $\phi^T$ and $m-n$ false dummy jobs, we always schedule a true dummy job with a true variable job, and a false dummy job with a false
variable job. It is easy to see that in this way, the load of every dummy machine is $r+O(\epsilon)$.

Thus in all, if there exists a truth assignment for $I_{sat}'$ in which every clause is satisfied by exactly one literal, then there exists
a feasible schedule for $I_{sch}$ whose makespan is $r+O(\epsilon)$.

\subsection{From Scheduling to 3SAT}
The whole subsection is devoted to proving the following theorem.
\begin{theorem}
\label{th:sche to sat}
If there is a solution for $I_{sch}$ whose makespan is strictly less than
$r+1$, then there exists a truth assignment for $I_{sat}'$
where every clause is satisfied by exactly one literal.
\end{theorem}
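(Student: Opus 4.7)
The plan is a cascade of forced placements followed by a global counting argument. First, every huge job is pinned to its corresponding machine: any other placement blows up either the first-coordinate product $\epsilon N^{4i+k}\cdot N^{-4i'-k'}$ (for TA-type huge jobs) or the second-coordinate product (with $\epsilon N^{N+j}$ or $N^{2N}$ for clause- and dummy-type huge jobs) to $\Omega(N)\gg r+1$. The same $N$-separation, applied to the real jobs, forces each $v_{i,k}^\gamma$ onto a machine of matching $(i,k)$-index---peel from the largest $4i+k$ downward, noting that each index value has exactly two jobs and two candidate machines---and forces each TA job $\tau_i^\gamma$ onto a TA machine of matching $i$, and each clause job $u_j^\gamma$ onto a clause machine of matching $j$. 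A residual-capacity count then forbids dummy jobs on TA or clause machines (those machines have capacity only $O(\xi)+1$ after their huge and matched real jobs, while a dummy job has third coordinate $\ge r/16$), so dummy jobs land on dummy machines; finally the third-coordinate $\xi$-constants force the two TA jobs on each TA machine to be of the types specified by the machine's label.

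Writing $x=0$ for ``True'' and $x=1$ for ``False'' for each job's truth variant, a direct third-coordinate computation---in which the $10\xi,20\xi,18\xi,12\xi$ constants in the variable jobs cancel exactly against $2\xi+8\xi,4\xi+16\xi,2\xi+16\xi,4\xi+8\xi$ from the paired TA jobs and against the matching $10\xi,20\xi,18\xi,12\xi$ inside the clause-machine huge jobs---gives the load $r+5\epsilon-2x_v+x_\tau+x_{\tau'}$ on a TA machine $(v_{i,k},\tau,\tau')$, $r+3\epsilon-2x_v+2x_u$ on a clause machine $(v_{i,k},u_j)$, and $r+\epsilon-2x_v+2x_d$ on a dummy machine $(v_{i,k},\phi)$. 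Since $\epsilon$ is arbitrarily small, the bound ``makespan $<r+1$'' thus delivers the key inequalities $-2x_v+x_\tau+x_{\tau'}\le 0$, $x_u\le x_v$, and $x_d\le x_v$.

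Let $V_i$ count the false variable jobs among the four on TA machines for $z_i$. Summing the TA inequality over $k=1,\ldots,4$, and using that each TA-job type contributes one $T$- and one $F$-version across its two hosting TA machines, gives $-2V_i+4\le 0$, hence $V_i\ge 2$ and $\sum_i(4-V_i)\le 2n$ true variable jobs on non-TA machines. On the other hand, $x_u\le x_v$ and $x_d\le x_v$ restrict true variable jobs on clause/dummy machines to the $m$ $u^T$-slots and the $2n-m$ $\phi^T$-slots, capping them at $2n$ from the opposite side. Both bounds must therefore be simultaneously tight: $V_i=2$ for every $i$, every $u^T$- or $\phi^T$-slot carries a $v^T$-job, and every $u^F$- or $\phi^F$-slot carries a $v^F$-job. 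With $V_i=2$, the TA constraint $x_v=0\Rightarrow x_\tau=x_{\tau'}=0$, together with the one-$T$-one-$F$ split of each TA-job type across its two hosting machines, rules out any pair of $T$-$v$-indices that share a TA-job type; only the pairs $\{1,2\}$ and $\{3,4\}$ survive, and we define $\tilde z_i=T$ in the latter case, $\tilde z_i=F$ in the former.

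By the previous step, a clause machine $(v_{i,k},u_j)$ carries $v^T$ iff $v_{i,k}^T$ is free from TA machines iff $k$ lies in the $\tilde z_i$-pair, which, because positive literals of $\beta_j$ live at $k\in\{1,2\}$ and negative ones at $k\in\{3,4\}$, is exactly when the corresponding literal of $\beta_j$ is satisfied under $\tilde z$; since each clause has exactly one $u^T$-slot and all its other clause machines carry $v^F$, every clause is satisfied by exactly one literal. The main obstacle is the pincer counting step in the previous paragraph: the same quantity (true variable jobs on non-TA machines) is bounded above by $2n$ from two independent directions (TA inequality summed over variables versus clause/dummy inequalities summed over non-TA machines), and only when both caps snap shut on the nose do the constraints propagate tightly enough to force $V_i=2$ and to bind a $v^T$-job to every $u^T$-slot. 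Once this tightness is secured, extracting $\tilde z$ and verifying one-in-three satisfaction reduces to bookkeeping on the Venn structure of $\{a,b,c,d\}$ across the four TA machines of each variable.
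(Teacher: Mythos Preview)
Your proof is correct and follows the same overall skeleton as the paper: pin huge jobs, then variable jobs, then the remaining job types to their labeled machines, establish that all non-huge jobs on a machine share the same $T/F$ tag, and read off the assignment. The substantive difference is in how that last ``truth benevolence'' step is obtained. The paper asserts at the outset that every machine has load in $[r,r+1)$ and then argues \emph{per machine} using both the upper and the lower bound (e.g.\ ``if $v^F$ sits with a true TA job, the load drops below $r$''). Your route uses only the upper bound: the per-machine inequalities $-2x_v+x_\tau+x_{\tau'}\le 0$, $x_u\le x_v$, $x_d\le x_v$ are summed globally, and the pincer on the count of true variable jobs on non-TA machines forces every inequality to be tight. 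These are two presentations of the same underlying fact---the integer corrections on the $8n$ machines are each $\le 0$ and sum to $0$, hence all vanish---but your global counting makes the logic fully explicit, whereas the paper's per-machine lower bound rests on the same total-sum computation stated rather tersely.

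Two small points to tidy. First, your forcing order should be variable $\to$ clause $\to$ dummy $\to$ TA rather than variable $\to$ TA $\to$ clause $\to$ dummy: a TA job $\tau_i$ has second coordinate $\epsilon N^i$, which is harmless on dummy machines (whose second speed is $0$), so $N$-separation alone does not keep it off dummy machines; you need the residual-capacity argument \emph{after} dummy jobs are seated to exclude this. Second, in your pincer paragraph, $\sum_i(4-V_i)$ counts the true variable jobs on \emph{TA} machines, not non-TA; the quantity you actually squeeze is $\sum_i V_i$ (equivalently, $v^T$'s on non-TA machines), bounded below by $2n$ from $V_i\ge 2$ and above by $2n$ from the clause/dummy side. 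The conclusion is unaffected.
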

To prove the theorem, we start with the
following simple observation.

\noindent\textbf{Observation: } The processing time of a job on
every machine is greater than or equal to the third coordinate of
its size vector.

Using the above observation, it is not difficult to calculate that the total processing time
of all the jobs is at least $8nr$.
Let $Sol^*$ be the solution whose makespan is strictly less than
$r+1$, then the load
of every machine is in $[r,r+1)$. We check the scheduling of jobs in this solution.

\begin{lemma}
In $Sol^*$, there is one huge job on each machine, furthermore
\begin{itemize}
\item A huge job corresponds to a dummy machine is on a dummy
machine
\item A huge job corresponds to a clause machine is on a clause
machine
\item A huge job corresponds to a truth assignment machine is on a
truth assignment machine
\end{itemize}
\end{lemma}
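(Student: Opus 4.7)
The plan is to establish the lemma in two stages: first that every machine carries exactly one huge job, then that huge jobs of each type land only on machines of the matching type. The starting point is the observation preceding the lemma, which says that the processing time of any job on any machine is bounded below by the third coordinate of its size vector. Since the paper has already shown that the load on every machine of $Sol^*$ lies in $[r,r+1)$, it will suffice to compare huge-job third coordinates, and cross-coordinate products from the first two entries, against this narrow window.

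For the first stage, I would read off the third coordinates of the three families of huge jobs: truth-assignment huge jobs give $7/8\cdot r$, clause huge jobs give $5/8\cdot r + O(\xi)$, and dummy huge jobs give $13/16\cdot r + O(\xi)$. The minimum is thus at least $5/8\cdot r$, so two huge jobs on a single machine would load it by at least $5/4\cdot r > r+1$ (using $r=2^{13}$), contradicting the makespan bound. Hence at most one huge job sits on each machine, and since the total count of huge jobs equals the total count of machines ($4n$ truth-assignment, $3n$ clause, and $n$ dummy, totalling $8n$ of each), pigeonhole forces exactly one huge job per machine.

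For the second stage, I would use the second coordinate to block cross-type placements. A dummy huge job has second coordinate $N^{2N}$, so its product with the second coordinate of any truth-assignment machine ($N^{-i'}$) is at least $N^{2N-n}$, and with any clause machine ($N^{-N-j'}$) is at least $N^{N-m}$; with $N=n/\epsilon^2$ both quantities vastly exceed $r+1$, so dummy huge jobs can only lie on dummy machines, and equal counts give a perfect type pairing. Similarly a clause huge job has second coordinate $\epsilon N^{N+j}$, which on a truth-assignment machine contributes at least $\epsilon N^{N+1-n}$ to the load, again dwarfing $r+1$; since the dummy slots are now occupied, clause huge jobs must go onto clause machines. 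The remaining $4n$ truth-assignment huge jobs then fill the $4n$ truth-assignment machines by elimination. The main obstacle is purely arithmetic rather than conceptual: one must verify that each cross-type coordinate product overwhelms $r+1$ over the full range of indices, which follows once $N$ is taken polynomially large in $n/\epsilon$.
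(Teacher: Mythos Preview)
Your proposal is correct and follows essentially the same approach as the paper's own proof: use the third coordinate (via the preceding observation) to see that two huge jobs would overload a machine, so pigeonhole gives exactly one per machine; then use the second coordinate to force dummy huge jobs onto dummy machines and clause huge jobs onto clause (not truth-assignment) machines, with truth-assignment huge jobs placed by elimination. The only differences are cosmetic---the paper uses the slightly coarser bound $5/8\,r-20\xi>1/2\,r+1$ in the first step and is less explicit about the counts $4n$, $3n$, $n$---but the logical structure is identical.
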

\begin{proof}
According to the observation, the processing time of a huge job is
at least $5/8r-20\xi>1/2r+1$, thus there is at most one huge job on
each machine. Given the fact that there are $8n$ machines and $8n$
huge jobs, we know that there is one huge job on each machine in
$Sol^*$.

Consider any huge job corresponding to a dummy machine. Notice that
the second coordinate of its size vector is always $N^{2N}$,
implying that the processing time of this job is $\Omega(N)$ on
clause machines and truth assignment machines, thus this job is on a
dummy machine. Recall that there are $n$ dummy machines and $n$ huge
jobs corresponding to dummy machines, these huge jobs must be on
these dummy machines, one for each.

Consider any huge job corresponding to a clause machine. The second
coordinate of its size vector is at least $N^{N}$, implying that its
processing time is at least $\Omega(N)$ if it is put on a truth
assignment machine. On the other hand it could not be put on a dummy
machine either, thus it must be on a clause machine.

Similar arguments show that a huge job corresponding to a truth
assignment machine must be on a truth assignment machine.

\end{proof}

\begin{lemma}
\label{le:variable-satisfy}
In $Sol^*$, except for the huge jobs,
\begin{itemize}
\item There is a variable job on each machine
\item There is a clause job on each clause machine
\item There is a dummy job on each dummy machine
\end{itemize}
\end{lemma}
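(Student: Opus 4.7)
The plan is to combine the Observation (processing time is at least the third coordinate of the size vector) with the fact, from the preceding lemma, that each machine already carries exactly one matched huge job. After subtracting that huge job's contribution the residual capacity on a truth-assignment machine is at most $r/8+1$, on a clause machine at most $3r/8+1-10\xi$, and on a dummy machine at most $3r/16+1-12\xi$. I would then handle the three bullets in sequence, in each case reducing the claim to pigeonhole once an "at most one per machine" bound has been secured.

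First I would prove the clause bullet. A clause job's third coordinate exceeds $r/4$, which alone overshoots the residual on truth-assignment and dummy machines; placing $u_j^\gamma$ on a truth-assignment machine $(v_{i'},\cdot)$ also generates a second-coordinate contribution of order $\epsilon N^{N+j-i'} = \Omega(\epsilon N) \gg r+1$, giving a second independent barrier. Two clause jobs on one clause machine would contribute more than $r/2$, overshooting the $3r/8$-type residual. Since there are $3n$ clause jobs and $3n$ clause machines, pigeonhole forces exactly one clause job per clause machine. Next I would prove the variable bullet: two variable jobs contribute at least $r/4-40\xi-8$, which overshoots the residual on truth-assignment and dummy machines as well as on a clause machine after the (now mandatory) clause job is subtracted. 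So every machine hosts at most one variable job, and pigeonhole on $8n$ variables against $8n$ machines makes this exactly one.

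The dummy bullet is then immediate. A truth-assignment machine loaded with its huge and its variable has residual at most $20\xi+5$, strictly less than the dummy job's third coordinate $r/16+2$; a clause machine loaded with its huge, clause, and variable has even smaller residual $10\xi+3$. So every dummy job must sit on a dummy machine, and since two dummies together with the mandatory variable would overflow the dummy machine's residual of roughly $r/16+O(\xi)$, at most one dummy job fits per dummy machine, and pigeonhole on $n$ versus $n$ finishes. The main work is arithmetic: each of the inequalities above must hold strictly with the chosen constants $r=2^{13}$ and $\xi=2^{3}$, which is precisely what motivates those values; no step involves a novel idea beyond careful bookkeeping of the $O(\xi)$ slack and the $\epsilon$ terms coming from the first and second coordinates of the size vectors.
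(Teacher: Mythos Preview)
Your proposal is correct and follows essentially the same route as the paper's proof: first pin clause jobs to clause machines by a third-coordinate residual argument plus pigeonhole, then do the same for variable jobs (using that clause machines already carry a clause job), and finally for dummy jobs. Your count of $3n$ clause jobs versus $3n$ clause machines is the right one (the paper writes ``$n$'' there, which is a slip), and your extra remark about the second-coordinate $\Omega(\epsilon N)$ barrier for placing $u_j^\gamma$ on a truth-assignment machine is a valid alternative obstruction not used in the paper.
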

\begin{proof}
Consider a clause job. Its processing time is greater than $1/4r$.
If it is put on a truth assignment machine, then the load of this
machine becomes larger than $1/4r+7/8r>r+1$, which is contradiction.
Else if it is put on a dummy machine, then the load of this machine
becomes larger than $1/4r+13/16r+20\xi>r+1$, which is also a contradiction. Hence a clause job could only
be on a clause machine. Meanwhile if there are two clause jobs on
one clause machine, then the load of this machine also becomes
larger than $5/8r+20\xi+1/2r>r+1$. As there are $n$ clause jobs and clause machine, there is exactly one clause
job on one clause machine.

Consider a variable job. It is not difficult to verify that there could not be two variable jobs on one machine since the
total processing time of two variable job is at least $1/4r-40\xi-8>3/16r$ (due to the fact that $r=2^{10}\xi$). Given that
there are $8n$ variable jobs and $8n$ machines, there is one variable job on each machine. Now a dummy job could only be
on a dummy machine, and similar arguments show that there could be at most one dummy job on a dummy machine, hence there is
one dummy job on one dummy machine.
\end{proof}

A machine is called variable-satisfied, if the variable job on this machine coincide with the symbol of this machine, i.e., for any machine
denoted as $(v_{i,k},*)$ or $(v_{i,k},*,*)$, the variable job on it is $v_{i,k}$ where $k=1,2,3,4$. We have the following lemma.

\begin{lemma}
Every machine is variable-satisfied.
\end{lemma}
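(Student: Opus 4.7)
\emph{Strategy.} The plan is to combine a global load-accounting inequality with a slot-based pigeonhole. Define the \emph{slot} of a symbol $v_{i,k}$ to be $\sigma(i,k):=4i+k$. Every machine carrying this symbol has first-coordinate speed $N^{-\sigma(i,k)}$, while every variable job $v_{j,\ell}^{\gamma}$ has first-coordinate size $\epsilon N^{\sigma(j,\ell)}$. Placing $v_{j,\ell}^{\gamma}$ on a machine of slot $\sigma_m$ therefore contributes $\epsilon N^{\sigma(j,\ell)-\sigma_m}$ to that machine's load: at most $\epsilon$ when $\sigma(j,\ell)\le\sigma_m$, but at least $\epsilon N=n/\epsilon$ otherwise.

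\emph{Global slack.} By the Observation, the load in $Sol^*$ is at least the sum of third coordinates over all jobs. A direct bookkeeping shows this sum equals exactly $8nr$: the $r$-coefficients from variable, clause, dummy and the three flavours of huge jobs total $1+3/4+1/16+7/2+15/8+13/16=8$; the $\xi$-coefficients cancel between variable jobs on one side and truth-assignment, huge-clause and huge-dummy jobs on the other (using that the numbers $n_1$ of variables with pattern ``one positive, two negative'' and $n_2$ of variables with the reverse pattern satisfy $n_1+n_2=n$); and the pure constants cancel between variable, truth-assignment, clause (split into $m_2$ two-literal and $m_3$ three-literal clauses with $2m_2+3m_3=3n$, $m_2+m_3=m$) and dummy jobs. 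Since the makespan of $Sol^*$ is strictly less than $r+1$, the total load is strictly less than $8n(r+1)=8nr+8n$; hence the total of all first- plus second-coordinate contributions across every (job, machine) pair is strictly less than $8n$. Choosing $\epsilon$ small enough that $\epsilon N=n/\epsilon>8n$ (any $\epsilon<1/8$ suffices), a single variable job placed on a machine of strictly smaller slot would contribute at least $\epsilon N>8n$, contradicting the bound on total excess. Thus every variable job $v_{j,\ell}^{\gamma}$ lies on a machine whose slot is at least $\sigma(j,\ell)$.

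\emph{Pigeonhole and conclusion.} By construction, each slot $(i,k)$ hosts exactly two machines---the truth-assignment machine $(v_{i,k},\cdot,\cdot)$ together with either a clause machine $(v_{i,k},u_{j})$ or a dummy machine $(v_{i,k},\phi)$, depending on the occurrence pattern of $z_i$---and exactly two variable jobs $v_{i,k}^{T}$ and $v_{i,k}^{F}$. Lemma~\ref{le:variable-satisfy} has already placed exactly one variable job on each machine. I induct downwards on $\sigma(i,k)$: the two variable jobs of maximum slot $4n+4$ must occupy the two machines of slot $4n+4$, since no machines of higher slot exist; having fixed these, the two jobs of the next slot are similarly forced onto the two machines of that slot (higher machines are full by the inductive hypothesis, lower ones are forbidden by the monotonicity just established); and so on down to slot $5$. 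This pins every $v_{i,k}^{\gamma}$ onto a machine of slot $(i,k)$, which is exactly the variable-satisfied condition. The main obstacle is the tight accounting behind the identity ``total third coordinate $=8nr$'', where three independent cancellations (in $r$, $\xi$, and pure constants) must all resolve simultaneously and require careful case-splitting on both clause arity and variable occurrence pattern; once the identity is in hand, the pigeonhole step is mechanical.
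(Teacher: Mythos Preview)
Your proof is correct and shares the same skeleton as the paper's: first establish that a variable job cannot sit on a machine of strictly smaller slot, then run a downward pigeonhole over slots (two jobs and two machines per slot, one variable job per machine by Lemma~\ref{le:variable-satisfy}). Where you differ is in how you justify the monotonicity step. You route it through a global load-accounting identity: the third-coordinate total equals exactly $8nr$, so the aggregate first- and second-coordinate contribution is below $8n$, and a single misplaced variable job would already contribute $\epsilon N>8n$. The paper instead uses the local and simpler observation that such a misplacement gives the job processing time $\Omega(\epsilon N)$ on that machine, which by itself exceeds the makespan bound $r+1$; no summation over all machines is needed. Your detour buys nothing extra here---the identity $\sum(\text{third coords})=8nr$ was in fact already used by the paper (just before Lemma~1) to pin every machine's load into $[r,r+1)$, and once that is known the local argument is immediate. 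So your accounting of the $r$-, $\xi$-, and constant-level cancellations, while accurate, is work you could have avoided.
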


\begin{proof}
Consider the eight jobs $v_{n,k}^{\gamma}$ where $\gamma=T,F$, $k=1,2,3,4$. For any machine denoted as $(v_{j,k},*)$ or $(v_{j,k},*,*)$,
the first coordinate of its speed vector is $N^{-4j-k}$, thus the processing time of $v_{n,k}$ on this machine becomes $\Omega(\epsilon N)$ if $j<n$.
Furthermore, it can be easily seen that $v_{n,4}$ could only be on machines denoted as $(v_{n,4},*)$ or $(v_{n,4},*,*)$. Since there are
two jobs $v_{n,4}$ (one true and one false), and two machines denoted as $(v_{n,4},*)$ or $(v_{n,4},*,*)$ (either machines $(v_{n,4},b_n,c_n)$ and $(v_{n,4},\phi)$, or
machines $(v_{n,4},b_n,c_n)$ and $(v_{i,4},u_{j_n})$ for some $j_n$), thus the two machines are satisfied. Iteratively applying the above
arguments we can prove that every machine is satisfied.
\end{proof}

Using similar arguments as the proof the above lemma, we can also prove that the huge job on every machine also coincide with the symbol of this machine.

A machine is called satisfied, if all the jobs on this machine coincide with the symbol of this machine, i.e., jobs are scheduled according
to Table~\ref{ta:overview}.

\begin{lemma}
Every machine is satisfied.
\end{lemma}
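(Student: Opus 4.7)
After the preceding lemmas, each machine already has the correct variable job and the correct huge job. To prove satisfaction it remains to pin down the two truth-assignment jobs on every truth-assignment machine and the single clause job on every clause machine; the dummy jobs on dummy machines require no further matching since they bear no subscript. My plan is to first use the second coordinate of the speed and size vectors to identify the indices of these jobs, then use the third coordinate to identify the labels within a fixed index.

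For the clause jobs I would induct downward on $j$. The second coordinate of the speed vector of $(v_{i,k},u_j)$ is $N^{-N-j}$ and a clause job $u_{j'}^{\gamma}$ has second-coordinate size $\epsilon N^{N+j'}$, so if $j'>j$ the resulting contribution $\epsilon N^{j'-j}\ge \epsilon N$ overshoots $r+1$. Each copy of $u_{j'}$ therefore lies on a clause machine of index at least $j'$, and starting at $j'=m$ a count (the number of copies of $u_{j'}$ equals the number of literals in $\beta_{j'}$, which equals the number of clause machines of index $j'$) gives the base case; the induction propagates down. An analogous descending induction on the variable index $i$, using the truth-assignment-machine speed $N^{-i}$ and the truth-assignment-job size $\epsilon N^{i'}$, places the eight truth-assignment jobs with label $i$ onto the four truth-assignment machines indexed by $i$, two per machine.

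It remains, for each fixed $i$, to determine which pair from $\{a_i,b_i,c_i,d_i\}$ lands on which machine. Subtracting the contributions of the huge job ($7/8r$) and the variable job ($1/8r-c_k\xi-O(1)$), the residual third-coordinate capacity of machine $(v_{i,k},\cdot,\cdot)$ before reaching $r+1$ is $c_k\xi+O(1)$, where $(c_1,c_2,c_3,c_4)=(10,20,18,12)$. The six possible unordered pairs have base third-coordinate sums $6\xi,10\xi,12\xi,18\xi,20\xi,24\xi$, so the prescribed pairs $(a_i,c_i),(b_i,d_i),(a_i,d_i),(b_i,c_i)$ fit their machines exactly; any other choice differs by at least $2\xi=16$, which overwhelms the $O(1)$ slack. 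A global identity, namely that the total third-coordinate mass $60\xi+12$ of the eight truth-assignment jobs exactly saturates the sum of residual capacities across the four machines, rules out joint swaps.

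The subtle step, and the main obstacle, is this last combinatorial argument. Individual per-machine feasibility alone admits pairs of smaller sum such as $(a_i,b_i)$ on $(v_{i,1},\cdot,\cdot)$, but every such swap shifts a larger pair like $(c_i,d_i)$ onto another machine, exceeding its residual capacity by at least $4\xi=32$. A finite case analysis verifies that every deviation from the prescribed distribution creates at least one overloaded machine; this together with the index arguments establishes that every machine is satisfied.
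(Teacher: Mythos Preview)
Your overall strategy---pin down the clause-job index with a second-coordinate monotonicity argument, pin down the truth-assignment-job index the same way, then use third-coordinate capacities to identify the labels---is exactly the paper's. Two points in your write-up are incomplete, however.

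First, your descending induction on $i$ only compares a truth-assignment job $a_{i'},b_{i'},c_{i'},d_{i'}$ against \emph{truth-assignment} machines of index $i$. It does not exclude such a job from sitting on a clause machine (second-coordinate speed $N^{-N-j}$, so the contribution $\epsilon N^{i'-N-j}$ is negligible) or on a dummy machine (second-coordinate speed $0$). The paper closes this gap first: once the huge job, the variable job, and the clause (respectively dummy) job are correctly placed on a clause (respectively dummy) machine, their third coordinates already sum to at least $r-2$, so an additional truth-assignment job of size $\ge 2\xi+1$ would push the load past $r+1$. You need this step before your induction can conclude that the eight index-$i$ truth-assignment jobs land on the four index-$i$ truth-assignment machines.

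Second, the induction on $i$ does not by itself give ``two per machine''; it only forces the eight jobs onto four machines. Your subsequent analysis is phrased entirely in terms of \emph{pairs}, so a distribution such as $\{a_i^T,a_i^F,b_i^T\}$ on $(v_{i,1},a_i,c_i)$ (third-coordinate sum $8\xi+4$, well under the residual $10\xi+O(1)$) is not covered by the six-pair comparison you describe. The paper handles this by a sequential elimination rather than a pair-case split: the two $d_i$'s each need at least $16\xi$, so they must occupy machines $(v_{i,2},\cdot,\cdot)$ and $(v_{i,3},\cdot,\cdot)$, one each; the remaining residuals then force the $c_i$'s onto machines $1$ and $4$, and so on. Your global saturation identity ($60\xi+12$ against total residual $\approx 60\xi+O(1)$) is the right tool to rule out underfilling, but the per-machine placement argument has to accommodate arbitrary multisets, not just pairs, before that identity can bite.
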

\begin{proof}
Notice that the second coordinate of a clause job $u_j$ is $\epsilon N^{N+j}$, and there is one clause job
on every clause machine, thus using similar arguments as the proof of Lemma~\ref{le:variable-satisfy}, we can show that
the clause job $u_j$ is on a the clause machine $(v_{i,k},u_j)$.
Now adding up the processing times of the huge job, clause job and variable job on a clause machine, the sum is at least $r-2$, implying that
there is no truth assignment jobs on clause machines, and thus every clause machine is satisfied.

Consider a dummy machine. According to Lemma~\ref{le:variable-satisfy}, the total processing time of a dummy job and a variable job on a machine
is at least $r-2$, thus again there is no truth assignment jobs on it and every dummy machine is satisfied.

Consider truth assignment machines. The above analysis implies that all the truth assignment jobs are on these machines.
We check machines $(v_{1,1},a_1,c_1)$,
$(v_{1,2},b_1,d_1)$, $(v_{1,3},a_1,d_1)$, $(v_{1,4},b_1,c_1)$. The total load of the four machines falls in $[4r,4r+4)$, and the
amount contributed by variable and huge jobs is among $[4r-60\xi-16,4r-60\xi-8]$, thus the amount contributed by truth assignment jobs is in $[60\xi+8,60\xi+20]$.
Notice that for any $i\ge 2$, the processing time of job $a_i$, $b_i$, $c_i$ or $d_i$ is at least $\Omega(\epsilon N)$ on the four machines we consider,
thus there are at most $8$ truth assignment jobs on these machines, namely $a_1$, $b_1$, $c_1$ and $d_1$. The total processing time of the $8$ jobs
is $60\xi+12$, while each of them has a processing time of at least $2\xi+1$, implying that all these jobs are on the four machines. Consider the two jobs $d_1$ (one
true and one false), either has a processing time at least $16\xi$, implying that they can only be on machine $(v_{1,2},b_1,d_1)$ and $(v_{1,3},a_1,d_1)$. Furthermore,
they can not be on the same machine, thus there is one $d_1$ on machine $(v_{1,2},b_1,d_1)$ and $(v_{1,3},a_1,d_1)$. Using the same argument we can prove that $a_1$
and $c_1$ are on machine $(v_{1,1},a_1,c_1)$, $b_1$ and $d_1$ are on machine $(v_{1,2},b_1,d_1)$, $a_1$ and $d_1$ are on machine $(v_{1,3},a_1,d_1)$, and $b_1$ and
$c_1$ are on machine $(v_{1,4},b_1,c_1)$. In all, the four machines $(v_{1,1},a_1,c_1)$,
$(v_{1,2},b_1,d_1)$, $(v_{1,3},a_1,d_1)$, $(v_{1,4},b_1,c_1)$ are all satisfied. Iteratively using the above arguments, we can prove that every truth assignment
machine is satisfied.

\end{proof}

Notice that except for a huge job, the symbol of a job, say, $a_i$, may represent either $a_i^T$ or $a_i^F$. A machine is called truth benevolent, if except the
huge job, all the jobs
on it are either all true or all false.

\begin{lemma}
Every machine is truth benevolent.
\end{lemma}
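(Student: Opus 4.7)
My plan is to do a load accounting on each satisfied machine, exploiting the fact that by the preceding lemmas the job contents of each machine are fixed up to the $T/F$ labels. Since the total processing time is at least $8nr$ and each of the $8n$ machines has load strictly less than $r+1$, the load on every machine lies in the narrow interval $[r,r+1)$, which leaves very little slack for mixing truth values.

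First I would compute, for each of the three machine types, the baseline load $B$ obtained when every non-huge job on the machine is its $T$-version. The size vectors are engineered so that, on a truth-assignment machine such as $(v_{i,1},a_i,c_i)$, the $-10\xi$ inside the third coordinate of $s(v_{i,1}^T)$ is balanced by the $2\xi$ and $8\xi$ from $s(a_i^T)$ and $s(c_i^T)$ (and similarly $-20\xi=4\xi+16\xi$, $-18\xi=2\xi+16\xi$, $-12\xi=4\xi+8\xi$ for the other three assignment machines), while the integer offsets $-2$ and $+1,+1$ also cancel. An analogous cancellation occurs between $s(v_{i,k}^T)$ and the huge job together with $s(u_j^T)$ on a clause machine and with $s(\phi^T)$ on a dummy machine. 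A routine calculation should then give $B = r + 5\epsilon$ on every truth-assignment machine, $B = r + 3\epsilon$ on every clause machine and $B = r + \epsilon$ on every dummy machine, the $\epsilon$-terms coming from the inner products in the first two coordinates.

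Next I would quantify how the load shifts when a single non-huge job is flipped from $T$ to $F$. Because every machine's speed vector has third coordinate $1$, this shift equals the third-coordinate difference of the two versions of the job: $-2$ for a variable job $v_{i,k}$, $+1$ for a truth-assignment job $\tau_i$ with $\tau \in \{a,b,c,d\}$, and $+2$ for a clause job $u_j$ or a dummy job $\phi$. Writing $\Delta$ for the total load shift from $B$ and observing that $\Delta$ is an integer, the constraint $r \le B + \Delta < r+1$ forces $\Delta = 0$ for $\epsilon$ sufficiently small. On a truth-assignment machine this reads $-2x_v + x_1 + x_2 = 0$ with $x_v,x_1,x_2 \in \{0,1\}$ (where $x=1$ means $F$), which forces $x_v = x_1 = x_2$, i.e., all three non-huge jobs are all $T$ or all $F$; on a clause or dummy machine it reads $-2x_v + 2y = 0$, forcing $x_v = y$. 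In every case the non-huge jobs on the machine share a single truth value, so the machine is truth benevolent. The only real obstacle is the bookkeeping in the baseline computation, but since the size vectors were chosen precisely to produce this cancellation the verification is direct.
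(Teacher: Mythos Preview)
Your proposal is correct and follows essentially the same approach as the paper: both arguments exploit that, once every machine is satisfied, the load lies in the window $[r,r+1)$ and the $T/F$ swaps shift the load by the integers $-2$, $+1$, $+2$ in the third coordinate. The paper carries this out by direct case analysis (``if $v^T$ is here then a false companion pushes the load to $\ge r+1$; if $v^F$ is here then a true companion drops it below $r$''), whereas you package the same computation as a single integrality argument $\Delta=0$; the content is identical.
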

\begin{proof}
Consider a truth assignment machine, say, $(v_{i,1},a_i,c_i)$. If $v_{i,1}^T$ is on this machine, then $a_i$ and $c_i$ are both true, for otherwise
one of them is false, and the total processing time of the three jobs is at least $1/8r+1$, implying that the load of this machine is at least
$r+1$, which is a contradiction. Similarly, if $v_{i,1}^F$ is on this machine, then $a_i$ and $c_i$ are both false, for otherwise one of them is true,
and the total processing time of the three jobs plus the huge job is at most $r-1+O(\epsilon)<r$, which is a contradiction. Iteratively applying the above
arguments we can show that every truth assignment machine is truth benevolent.

Consider a clause machine, say, $(v_{i,k},u_j)$ for $k=1,2,3,4$. If $v_{i,k}^T$ is on this machine, then $u_j$ is true for otherwise the load of this machine is at least $r+2$, which
is a contradiction. If $v_{i,k}^F$ is on this machine, then $u_j$ is false, for otherwise the load of this machine is at most $r-2+O(\epsilon)<r$, which is also a contradiction.
Thus every clause machine is truth benevolent. Using the same argument we can also prove that every dummy machine is truth benevolent.
\end{proof}

Now we come to the proof of Theorem~\ref{th:sche to sat}. It is easy to see that for every $i$, jobs are either scheduled as $(v_{i,1}^T,a_i^T,c_i^T)$,
$(v_{i,2}^T,b_i^T,d_i^T)$, $(v_{i,3}^F,a_i^F,d_i^F)$, $(v_{i,4}^F,a_i^F,c_i^F)$ or $(v_{i,1}^F,a_i^F,c_i^F)$,
$(v_{i,2}^F,b_i^F,d_i^F)$, $(v_{i,3}^T,a_i^T,d_i^T)$, $(v_{i,4}^T,a_i^T,c_i^T)$. If the former case happens, we let
the variable $z_i$ be false, otherwise we let $z_i$ be true. We prove that, by assigning the truth value in this way, every
clause of $I_{sat}'$ is satisfied by exactly one literal.

Consider any clause, say, $\beta_j$, and suppose it contains three literals, say, $v_{i_1,k_1}$, $v_{i_2,k_2}$ and $v_{i_3,k_3}$ where $k_1,k_2,k_3\in\{1,2,3,4\}$.
Since there is one $u_j^T$ and two $u_j^F$, we assume that $u_j^T$ is scheduled with $v_{i_1,k_1}^T$.

We prove that $\beta_j$ is satisfied by variable $z_{i_1}$.
There are two possibilities. If $k_1\in\{1,2\}$, then $u_j^T$ and $v_{i_1,k_1}^T$ are on the machine $(v_{i_1,k_1},u_j)$, and according to the construction of
machines, such a machine is constructed as the positive literal $z_i$ appears in clause $\beta_j$ for the first or second time. According to our truth assignment,
variable $z_i$ is true, for otherwise $v_{i_1,k_1}^T$ is scheduled with $a_i^T$, $c_i^T$ or $b_i^T$, $d_i^T$. Otherwise $k_1\in\{3,4\}$, and machine $(v_{i_1,k_1},u_j)$
is constructed as the negative literal $\neg z_i$ appears in clause $\beta_j$ for the first or second time. Again according to the truth assignment now the variable
$z_i$ is false, thus in both cases $\beta_j$ is satisfied by variable $z_{i_1}$.

We prove that $\beta_j$ is not satisfied by either variable $z_{i_2}$ or $z_{i_3}$. Consider $z_{i_2}$, again there are two possibilities. If $k_2\in\{1,2\}$,
then the positive literal $z_{i_2}$ appears in $\beta_j$ for the first or second time, and meanwhile variable $z_{i_2}$ is false because otherwise $v_{i_2,k_2}^F$
is scheduled with $a_i^F$, $c_i^F$ or $b_i^F$, $d_i^F$, rather than $u_j^F$. Thus $\beta_j$ is not satisfied by variable $z_{i_2}$. Using the same argument we can
prove that if $k_2\in\{3,4\}$, $\beta_j$ is not satisfied by variable $z_{i_2}$, either. The proof is the same for variable $z_{i_3}$.

Thus in all, $\beta_j$ is satisfied by exactly one literal when it contains three literals. The same result also holds when $\beta_j$ contains two literals via the
same proof.

\bibliographystyle{plain}

\end{document}